\newtheorem{theorem}{Theorem}
\newtheorem{lemma}{Lemma}
\newtheorem{proposition}{Proposition}
\theoremstyle{definition}
\newtheorem{definition}{Definition}
\newtheorem{example}{Example}
\renewcommand{\ge}{\geqslant}
\renewcommand{\geq}{\geqslant}
\renewcommand{\le}{\leqslant}
\renewcommand{\leq}{\leqslant}
\renewcommand{\cite}{\citep}
\newcommand{\myol}[2][3]{{}\mkern#1mu\overline{\mkern-#1mu#2}}
\renewcommand{\bar}{\myol}
\renewcommand{\hat}{\widehat}
\newcommand{\bbR}{\mathbb{R}}
\newcommand{\floor}[1]{ \left\lfloor #1 \right\rfloor }
\newcommand{\set}[1]{\{#1\}}
\DeclareMathOperator*{\argmin}{argmin}
\DeclareMathOperator*{\argmax}{argmax}
\newcommand{\prefer}{\succ}
\newcommand{\wprefer}{\succcurlyeq}
\newcommand{\ka}{k_a}
\newcommand{\kp}{k_p}
\newcommand{\Ra}{R^a}
\newcommand{\Rp}{R^p}
\newcommand{\up}{u^p}
\newcommand{\hP}{\bar{P}}
\newcommand{\hR}{\hat{R}}
\newcommand{\wR}{\widetilde{R}}
\Crefname{algocf}{Algorithm}{Algorithms}
\newcommand{\nosemic}{\renewcommand{\@endalgocfline}{\relax}}
\newcommand{\dosemic}{\renewcommand{\@endalgocfline}{\algocf@endline}}
\let\oldnl\nl
\newcommand{\nonl}{\renewcommand{\nl}{\let\nl\oldnl}}
\title{Group Fairness in Peer Review\thanks{A preliminary version was published in the proceedings of NeurIPS 2023.}}
\author{
  Haris Aziz\\
  UNSW Sydney\\
  \texttt{haris.aziz@unsw.edu.au}
  \And
  Evi Micha\\
  University of Toronto\\
  \texttt{emicha@cs.toronto.edu}
  \And
  Nisarg Shah\\
  University of Toronto\\
  \texttt{nisarg@cs.toronto.edu}
}
\begin{document}\allowdisplaybreaks

\maketitle

\begin{abstract}
Large conferences such as NeurIPS and AAAI serve as crossroads  of various AI fields, since they attract submissions from a vast number of communities. However, in some cases, this has resulted in a poor reviewing experience for some communities, whose submissions get assigned to less qualified reviewers outside of their communities. An often-advocated solution is to break up any such large conference into smaller conferences, but this can lead to isolation of communities and harm interdisciplinary research. We tackle this challenge by introducing a  notion of group fairness, called the core, which requires that every possible community (subset of researchers) to be treated in a way that prevents them from unilaterally benefiting by  withdrawing from a large conference. 

We study a simple peer review model, prove that it always admits a reviewing assignment in the core, and design an efficient algorithm to find one such assignment. 
We use real data from CVPR and ICLR conferences to compare our algorithm to existing reviewing assignment algorithms on a number of metrics. 

\end{abstract}

\section{Introduction}

Due to their large scale, conferences like NeurIPS and AAAI use an automated procedure to assign submitted papers to reviewers. Popular such systems include the Toronto Paper Matching System  \cite{charlin2013toronto},  Microsoft CMT\footnote{\url{https://cmt3.research.microsoft.com/}}, and OpenReview\footnote{\url{https://github.com/openreview/openreview-matcher}}. The authors submitting their works are often very interested in receiving meaningful and helpful feedback from their peers~\cite{travis1991new,nickerson2005authors,Shah22a}.  Thus, their overall experience with the conference heavily depends on the quality of reviews that their submissions receive.

The typical procedure of assigning papers to reviewers is as follows. First, for each paper-reviewer pair, a similarity score is calculated based on various parameters such as the subject area of the paper and the reviewer, the bids placed by the reviewer, etc.~\citep{charlin2013toronto,mimno2007expertise,liu2014robust,rodriguez2008algorithm,tran2017expert}. Then, an assignment is calculated through an optimization problem, where the usual objectives are to maximize either the utilitarian social welfare, which is the total similarity score of all matched paper-reviewer pairs, or the egalitarian social welfare, which is the least total score of reviewers assigned to any submission. Relevant constraints are imposed to ensure that each submission receives an appropriate number of reviews, reviewer workloads are respected, and any conflicts of interest are avoided. 

\citet{peng2022strengthening} recently mentioned that a major problem with the prestigious mega conferences is that they constitute the main  venues for  several  communities, and as a result, in some cases, people are asked to review submissions that are  beyond their main areas of work. They claim that a reasonable solution is 
to move to a de-centralized publication process by creating more specialized conferences appropriate for different communities. 
While  specialized conferences definitely have their advantages, the maintenance  of large conferences that attract multiple communities is also crucial for the emergence of interdisciplinary ideas  that can be reviewed by diverse subject experts. Therefore, it is important to ensure that  no group  has an incentive to break off due to a feeling of being mistreated by the reviewing procedure of a large conference. In this work, we ask whether it is possible to modify the existing reviewing processes to resolve this issue by treating the various communities satisfactorily.  We clarify that the goal is not to build roadblocks to the creation of specialized conferences, but rather to mitigate the harm imposed on communities in large conferences.

To answer this, we look toward the literature on algorithmic fairness. Specifically, we adapt the group fairness notion known as \emph{the core}~\cite{gillies1953some}; to the best of our knowledge, we are the first to introduce it to the peer review setting. For a reviewing assignment to be in the core, it must ensure that no community (subset of researchers) can ``deviate'' by setting up its own conference in which (a) no author reviews her own submission, (b) each submission from within the community is reviewed by just as many reviewers as before, but now from within the community, (c) each reviewer reviews no more papers than before, and (d) the submissions are assigned to better reviewers, making the community happier. Intuitively, this is a notion of group fairness because it ensure that the treatment provided to every group of participants meets their ``entitlement'' (which is defined by what the group can achieve on its own). It is also a notion of stability (often also known as \emph{core stability}) because it provides no incentives for any community to break off and get isolated by setting up their own conference instead. Note that this definition provides fairness to {\em every possible} community, and not only to predefined groups, as is the case for fairness definitions such as demographic parity and equalized odds that are popular in the machine learning literature~\cite{hardt2016equality}. In particular, it ensures fair treatment to emerging interdisciplinary communities even before they become visible. 

\subsection{Our Contribution}
We consider a simple peer review model in which each agent submits (as the sole author) a number of papers to a conference and also serves as a potential reviewer. A reviewing assignment is valid if each paper is reviewed by $\kp$ reviewers, each reviewer reviews no more than $\ka$ papers, and no agent reviews her own submissions. To ensure that a valid assignment always exists, we assume that the maximum number of papers that each agent is allowed to submit is at most $\floor{\ka/\kp}$.

In \Cref{sec:Core}, we present an efficient algorithm that always returns a valid assignment in the core under minor conditions on the preferences of the authors. Specifically, our algorithm takes as input only the preference ranking of each author over individual potential reviewers for each of her submissions. Then, it produces an assignment that we prove to be in the core for any manner in which the agent's submission-specific preferences over individual reviewers may be extended to preferences over a set of $\kp$ reviewers assigned to each submission, aggregated over submissions, subject to two mild conditions being satisfied. 

In \Cref{sec:experiments}, we conduct experiments with real data from CVPR and ICLR conferences, and evaluate the price that our algorithm must pay --- in lost utilitarian and egalitarian welfare --- in order to satisfy the core and prevent communities from having an incentive to deviate. We also observe that reviewer assignment methods currently used in practice generate such adverse incentives quite often.

\subsection{Related Work}
As we mentioned above, usually the first step of a review assignment procedure is to calculate a similarity score for each pair of submission and reviewer which aims to capture  the expertise of the reviewer for this submission. The problem of identifying the similarity scores has been extensively studied in the literature~\cite{charlin2013toronto,mimno2007expertise,liu2014robust,rodriguez2008algorithm,tran2017expert, stelmakh2023gold}. In this work, we assume that the similarity scores are given as an input to our algorithm after they have been calculated from a procedure that is considered as a black box. Importantly, our algorithm does not need  the exact values of the similarity scores, but it only requires a ranking of the reviewers for each paper, indicating their relative expertise for this paper.

Given, the similarities scores various methods have been proposed for finding a reviewing assignment. The most famous algorithm is the Toronto Paper Matching System  \cite{charlin2013toronto} which is a very broadly applied method and focuses on maximizing the utilitarian welfare, i.e., the sum of the similarities across all assigned reviewers and all papers. This approach has been adopted by other popular conference management systems such as EasyChair\footnote{\url{https://easychair.org}} and HotCRP \footnote{\url{https://hotcrp.com}}~\cite{stelmakh2019peerreview4all}. While this approach optimizes the total welfare, it is possible to discriminate against some papers. Therefore, other methods  have focused on finding reviewing assignments that are (also) fair across all  papers. 

\citet{o2005paper} suggest a method where the goal is to  maximize the total score   that a paper gets, while~\citet{stelmakh2019peerreview4all} generalized this method   by maximizing the minimum paper score, then maximizing the
next smallest paper score, etc. \citet{hartvigsen1999conference} ensure fairness by requiring that each paper is assigned at least one qualified reviewer. \citet{kobren2019paper} proposed two algorithms that maximize that total utilitarian under the constraint that each paper should receive a  score that exceeds a particular threshold. \citet{Payan2022EF1} used the idea of envy-freeness~\cite{foley1966resource} from the fair division literature to ensure fairness over the submissions. Moreover,
some other works have focused on being fair over the reviewers rather than the papers~\cite{,GKKM+10a,LMNW18}). 
The core property that we consider in this work can  be viewed as a fairness  requirement over groups of authors. 
The reader can find more details about the challenges of the peer review problem in the recent survey of~\citet{shah2022challenges}.

In our model, the review assignment problem  is  related to exchange problems with endowments~\cite{ShSc74a}, since authors can be viewed as being endowed by their own papers which they wish to exchange with other authors that also serve as reviewers.
For the basic exchange problem of housing reallocation, \citet{ShSc74a} showed that an algorithm called \emph{Top-Trading-Cycle (TTC)} finds an allocation which is in the core. The first part of our algorithm uses a variation of TTC  where the agents (authors) are incorporated with multiple items (submissions), and constraints related to how many items each agent can get and to how many agents one item should be assigned should be satisfied. In contrast to classical exchange problem with endowments, our model has a distinctive requirement that agents/authors need to give away \textit{all} their items/papers as the papers need to be reviewed by the agent who gets the paper. As we further explain in~\Cref{sec:Core}, this difference is crucial and requires further action from our algorithm than simply executing this variation of TTC. Various variations of TTC have been considered in the literature, tailored for different variations of the basic problem,  but to the best of our knowledge, none of them can be directly applied in our model. To give an example, \citet{suzuki2018efficient}  consider the case that there are multiple copies of the same object  and there are some quotas that should be satisfied, but they assume that each agent gets just one object while here each paper is assigned to multiple distinct reviewers.

\section{Model}
For $q \in \mathbb{N}$, define $[q] \triangleq \{1,\ldots,q\}$. There is a set of agents $N = [n]$. Each agent $i$ submits a set of papers $P_i=\{p_{i,1},\ldots, p_{i,m_i}\}$ for review by her peers, where $m_i\in \mathbb{N}$, and is available to review the submissions of her peers. We refer to $p_{i,\ell}$ as the $\ell$-th submission of agent $i$; when considering the special case of each agent $i$ having a single submission, we will drop $\ell$ and simply write $p_{i}$. Let $P = \cup_{i \in N} P_i$ be the set of all submissions and $m=\sum_{i\in N} m_i$ be the total number of submissions. 

\textbf{Assignment.}~ Our goal is to produce a \emph{(reviewing) assignment} $R : N \times P \to \set{0,1}$, where $R(i,j)=1$ if agent $i \in N$ is assigned to review submission $j \in P$. With slight abuse of notation, let $\Ra_i=\{j \in P : R(i,j)=1 \}$ be the set of submissions assigned to agent $i$ and $\Rp_j=\{i\in N : R(i,j)=1\}$ be the set of agents assigned to review submission $j$. We want the assignment to be \emph{valid}, i.e., satisfy the following constraints:
\begin{itemize}[leftmargin=2.5em,noitemsep,topsep=0pt]
\item Each agent must be assigned at most $\ka$ submissions for review, i.e., $|\Ra_i| \le \ka, \forall i \in N$.
\item Each submission must be assigned to $\kp$ agents, i.e., $|\Rp_{j}|=\kp, \forall j \in P$.
\item No agent should review one of her own submissions, i.e., $R(i,p_{i,\ell}) = 0, \forall i \in N, \ell \in [m_i]$.
\end{itemize}
To ensure that a valid assignment always exists, we impose the constraint that $m_i \cdot \kp\leq \ka$ for each $i \in N$, which implies that $m\cdot \kp\leq n\cdot \ka$. Intuitively, this demands that each agent submitting papers be willing to provide as many reviews as the number of reviews assigned to the submissions of any single agent. For further discussion on this condition, see \Cref{sec:discussion}. 

Note that given $N' \subseteq N$ and $P'_i \subseteq P_i$ for each $i \in N'$ with $P' = \cup_{i \in N'} P'_i$, the validity requirements above can also be extended to a restricted assignment $\hat{R} : N' \times P' \to \set{0,1}$. Hereinafter, we will assume validity unless specified otherwise or during the process of building an assignment. 

\textbf{Preferences.} Each agent $i\in N$ has a preference ranking, denoted $\sigma_{i,\ell}$, over the agents in $N\setminus \{i\}$ for reviewing her $\ell$-th submission $p_{i,\ell}$.\footnote{Our algorithms continue to work with weak orders; one can arbitrarily break ties to convert them into strict orders before feeding them to our algorithms.} These preferences can be based on a mixture of many factors, such as how qualified the other agents are to review submission $p_{i,\ell}$, how likely they are to provide a positive review for it, etc. Let  $\sigma_{i,\ell}(i')$ be the position of agent $i'\in N\setminus\{i\}$ in the ranking. We say that agent $i$ prefers agent $i'$ to agent $i''$ as a reviewer for $p_{i,\ell}$ if $\sigma_{i,\ell}(i')<\sigma_{i,\ell}(i'')$. Again, in the special case where the agents have a single submission each, we drop $\ell$ and just write $\sigma_{i}$. Let $\vec{\sigma}=(\sigma_{1,1}, \ldots, \sigma_{1,m_1},  \ldots, \sigma_{n,1}, \ldots, \sigma_{n,m_n})$.  

While our algorithm takes $\vec{\sigma}$ as input, to reason about its guarantees, we need to define agent preferences over assignments by extending $\vec{\sigma}$. In particular, an agent is assigned a set of reviewers for each of her submissions, so we need to define her preferences over sets of sets of reviewers. First, we extend to preferences over sets  of reviewers for a given submission, and then aggregate preferences across different submissions. Instead of assuming a specific parametric extension (e.g., additive preferences), we allow all possible extensions that satisfy two mild constraints; the group fairness guarantee of our algorithm holds with respect to any such extension. 

\emph{Extension to a set of reviewers for one submission:} Let $S \prefer_{i,\ell} S'$ (resp., $S \wprefer_{i,\ell} S'$) denote that agent $i$ strictly (resp., weakly) prefers the set of agents $S$ to the set of agents $S'$ for her $\ell$-th submission $p_{i,\ell}$. We require only that these preferences satisfy the following mild axiom.

\begin{definition}[Order Separability]
For every disjoint $S_1,S_2, S_3 \subseteq N$ with $|S_1|=|S_2| > 0$, if it holds that $\sigma_{i,\ell}(i') < \sigma_{i,\ell}(i'')$ for each $i' \in S_1$ and $i'' \in S_2$, then we must have $S_1 \cup S_3 \prefer_{i,\ell} S_2\cup S_3 $. 
\end{definition} 	
An equivalent reformulation is that between any two sets of reviewers $S$ and $T$ with $|S|=|T|$, ignoring the common reviewers in $S \cap T$, if the agent strictly prefers every (even the worst) reviewer in $S \setminus T$ to every (even the best) reviewer in $T \setminus S$, then the agent must strictly prefer $S$ to $T$. 

\begin{example}
    Consider the common example of additive preferences, where each agent $i$ has a utility function $u_{i,\ell} : N \setminus \set{i} \to \bbR_{\ge 0}$ over individual reviewers for her $\ell$-th submission, inducing her preference ranking $\sigma_{i,\ell}$. In practice, these utilities are sometimes called similarity scores. Her preferences over sets of reviewers are defined via the additive utility function $u_{i,\ell}(S) \triangleq \sum_{i' \in S} u_{i,\ell}(i')$. It is easy to check that for any disjoint $S_1,S_2,S_3$ with $|S_1|=|S_2| > 0$, $u_{i,\ell}(i') > u_{i,\ell}(i'')$ for all $i' \in S_1$ and $i'' \in S_2$ would indeed imply $u_{i,\ell}(S_1 \cup S_3) > u_{i,\ell}(S_2 \cup S_3)$. Additive preferences are just one example from a broad class of extensions satisfying order separability.
\end{example}

\emph{Extension to assignments.} Let us now consider agent preferences over sets of sets of reviewers, or equivalently, over assignments. Let $R \prefer_i \hat{R}$ (resp., $R \wprefer_i \hat{R}$) denote that agent $i$ strictly (resp., weakly) prefers assignment $R$ to assignment $\hat{R}$. Note that these preferences collate the submission-wise preferences $\prefer_{i,\ell}$ across all submissions of the agent. We require only that the preference extension satisfies the following natural property.  

\begin{definition}[Consistency]
For any assignment $R$, restricted assignment $\hat{R}$ over any $N' \subseteq N$ and $P' = \cup_{i \in N'} P'_i$ (where $P'_i \subseteq P_i$ for each $i \in N'$), and agent $i^* \in N'$, if it holds that $\Rp_{p_{i^*,\ell}} \wprefer_{i^*,\ell} \hat{R}^p_{p_{i^*,\ell}}$ for each $p_{i^*,\ell} \in P'_i$, then we must have $R \wprefer_i \hat{R}$. 
\end{definition} 

In words, if an agent weakly prefers $R$ to $\hat{R}$ for the set of reviewers assigned to each  of her submissions individually, then she must prefer $R$ to $\hat{R}$ overall. 

\begin{example}
Let us continue with the previous example of additive utility functions. The preferences of agent $i$ can be extended additively to assignments using the utility function $u_i(R) = \sum_{p_{i,\ell} \in P} u_{i,\ell}(\Rp_{p_{i,\ell}})$. It is again easy to check that if $u_{i,\ell}(\Rp_{p_{i,\ell}}) \ge u_{i,\ell}(\hat{R}^p_{p_{i,\ell}})$ for each $p_{i,\ell}$, then $u_i(R) \ge u_i(\hat{R})$. Hence, additive preferences are again one example out of a broad class of preference extensions that satisfy consistency. 
\end{example}

\textbf{Core.}~ Our goal is to find a group-fair assignment which treats every possible group of agents at least as well as they could be on their own, thus ensuring that no subset of agents has an incentive to deviate and set up their own separate conference. Formally:
\begin{definition}[Core]
An assignment $R$ is in the core if there is no $N' \subseteq N$, $P'_i \subseteq P_i$ for each $i \in N'$, and restricted assignment $\hat{R}$ over $N'$ and $P' = \cup_{i\in N'}P'_i$ such that $\hat{R} \prefer_i R$ for each $i\in N'$.
\end{definition}

In words, if any subset of agents deviate with any subset of their submissions and implement any restricted reviewing assignment, at least one deviating agent would not be strictly better off, thus eliminating the incentive for such a deviation. We also remark that our algorithm takes only the preference rankings over individual reviewers $\vec{\sigma}$ as input and produces an assignment $R$ that is guaranteed to be in the core according to \emph{every preference extension} of $\vec{\sigma}$ satisfying order separability and consistency. 

\section{CoBRA: An Algorithm for Computing Core-Based Reviewer Assignment}\label{sec:Core}

\begin{algorithm}[t]
    \SetAlgoLined
    \KwIn{ $N, P, \vec{\sigma}, \ka,\kp$}
    \KwOut{$R$ }
     $R, L, U=$PRA-TTC$(N, P, \vec{\sigma}, \ka,\kp)$\;
\If{$|U|>0$}{
	 $R=$Filling-Gaps$(N, P, \vec{\sigma}, \ka,\kp, R, L, U)$\;
}
  \caption{CoBRA}\label{alg:core}
\end{algorithm}

\begin{algorithm}[t]
    \SetAlgoLined
       \KwIn{ $N, P, \vec{\sigma}, \ka,\kp$}
    \KwOut{$R, L, U$ }

     $R(i,j)\gets 0$, $\forall i \in N$ and $\forall j \in P$\;
     Construct the preference graph $G_R$\;
    \While {$\exists$ cycle 
in $G_R$}{
	 Eliminate the cycle\;
	 Update $\hP_i$-s by removing any completely assigned paper\;
	 Update $G_R$\; }
 $U \gets \{i \in N: \hP_i\neq \emptyset\}$  \label{alg-line:U}\;
 $L\gets$ the last  $k_p-|U|+1$  agents in $N\setminus U$ to have all their submissions completely assigned  \label{alg-line:L}\;
  \caption{PRA-TTC}\label{alg:cycle-elimination}
\end{algorithm}

\begin{algorithm}[t]
    \SetAlgoLined
    \KwIn{ $N, P, \vec{\sigma}, \ka,\kp, R, L, U$}
    \KwOut{$R$ }
 \dosemic\nonl \textbf{Phase 1}\;
     Construct the greedy graph $G_R$\; 
 \While {$\exists$ cycle }{
		 Eliminate the cycle \;
    	 Update $\hP_i$-s by removing any completely assigned paper\;
		 Update $U$ and $L$ by moving any agent $i$ from $U$ to $L$ if $\hP_i=\emptyset$\;
		 Update $G_R$\; }
 \dosemic\nonl \textbf{Phase 2}\;
 Construct the topological order $\vec{\rho}$ of $G_R$\;
\For {$t \in [|U|]$}{
	\While {$\hP_{\rho(t)}\neq \emptyset$}{
		 Pick arbitrary $p_{\rho(t),\ell }\in \hP_{\rho(t)}$ \;
	   Find completely assigned $p_{i',\ell'}$, with $R(\rho(t),p_{i',\ell'})=0$, for some $i'\in U\cup L\setminus \{\rho(t)\}$ \; \label{algo-line1}
		 Find $i''\neq \rho(t)$ such that  $R(i'', p_{\rho(t),\ell })=0$ and $R(i'',p_{i',\ell'})=1$  \; \label{algo-line2}
	 $R(i'', p_{\rho(t),\ell })\gets 1$; 
  $R(i'',p_{i',\ell'})\gets 0$; $R( \rho(t),p_{i',\ell'} )\gets 1$ \;
		 Remove $p_{\rho(t),\ell }$ from $\hP_{\rho(t)}$ if it is completely assigned\;
	}
}
\caption{Filling-Gaps}\label{alg:filling-gaps}
\end{algorithm}

In this section, we prove our main result: when agent preferences are order separable and consistent, an assignment in the core always exists and can be found in polynomial time. 

\textbf{Techniques and key challenges:}~ The main algorithm  CoBRA (Core-Based Reviewer Assignment), presented as \Cref{alg:core}, uses two other algorithms, PRA-TTC and Filling-Gaps, presented as \Cref{alg:cycle-elimination} and \Cref{alg:filling-gaps}, respectively.  We remark that PRA-TTC is an adaptation of the popular Top-Trading-Cycles (TTC) mechanism, which is known to produce an assignment in the core for the house reallocation problem (and its variants)~\cite{ShSc74a}. The adaptation mainly incorporates the constraints related to how many papers each reviewer can review and how many reviewers should review each paper. While for $\kp=\ka=1$, PRA-TTC is identical with the classic TTC that is used for the  house reallocation problem, the main difference of this problem with the  review assignment problem  is that in the latter each agent should give away her item (i.e., her submission) and obtain the item  of another agent. Therefore, by simply executing TTC in the review assignment problem,   one can get into a deadlock before producing a valid assignment.  For example, consider the case of three agents, each with one submission. Each submission must receive one review ($\kp = 1$) and each agent provides one review ($\ka = 1$).  The TTC mechanism may start by assigning agents $1$ and $2$ to review each other's  submission, but this cannot be extended into a valid assignment because there is no one left to review the submission of agent $3$. This is where Filling-Gaps comes in; it makes careful edits to the partial assignment produced by the PRA-TTC, and the key difficulty is to prove that this produces a valid assignment while still satisfying the core. 

\subsection{Description of CoBRA}
Before we describe CoBRA in detail, let us introduce some more notation.  Let $m^*=\max_{i\in N} m_i$.  For reasons that will become clear later, we want to ensure that $m_i=m^*$,  for each  $i \in N$. To achieve that,  we add  $m^*-m_i$ dummy submissions to agent $i$, and the rankings over reviewers with respect to these submissions are arbitrarily. 	An assignment is called {\em partial} if there are submissions that are reviewed by less than $k_p$ agents.  A submission that is reviewed by $k_p$ agents under a partial assignment is called  {\em completely assigned}. Otherwise, it is called  {\em incompletely assigned}. We denote with $\hP_i(\hR)$ the set of submissions of $i$ that are incompletely assigned under a partial assignment $\hR$. We omit $\hR$ from the notation when it is clear from context. 

CoBRA calls PRA-TTC, and then if needed, it calls Filling-Gaps. Below, we describe the algorithms. 
 
\textbf{PRA-TTC.} In order to define  PRA-TTC, we first need to introduce the notion of a {\em preference graph}. Suppose we have a partial assignment $\hR$. Each agent $i$ with $\hP_i\neq \emptyset$ picks one of her incompletely assigned submissions arbitrarily. Without loss of generality, we assume that she picks her $\ell^*$-th submission.  We define the directed preference graph $G_{\hat{R}}=(N, E_{\hat{R}} )$ where each agent is a node and for each $i$ with $\hP_i\neq \emptyset$, $(i,i')\in E_{\hat{R}}$ if  and only if $i'$ is ranked highest in $\sigma_{i,\ell^*}$ among the agents that don't review $p_{i,\ell^*}$ and review less than $k_a$ submissions. Moreover, for each $i\in N$ with $\hP_{i}=\emptyset$, we add an edge from $i$ to $i'$, where $i'$ is an arbitrary agent with $\hP_{i'}\neq \emptyset$. PRA-TTC starts with an empty assignment, constructs the preference graph and searches for a directed cycle.  If such  a cycle exists, the algorithm eliminates it as following: For each $(i,i')$ that is included in the cycle, it assigns  submission $p_{i,\ell^*}$ to  $i'$ (if $i$'s submissions are already completely assigned, it does nothing) and   removes  $p_{i,\ell^*}$ from  $\hP_i$, if it is now completely assigned. Then, the algorithm updates the preference graph and continues to eliminate  cycles  in the same way.  When there are no left cycles in the preference graph, the algorithm terminates and returns  two sets, $U$  and $L$. The first set  contains all the agents that some of their submissions are incompletely assigned and the set $L$  contains the last $k_p-|U|+1$ agents whose all submissions became completely assigned.

\textbf{Filling-Gaps.} CoBRA  calls Filling-Gaps, if  the $U$ that returned from PRA-TTC is non empty. Before we describe  the Filling-Gaps, we also need to introduce the notion of a {\em greedy graph}. Suppose that we have a partial assignment $\hR$ which indicates a set $U$ that contains all the agents whose at least one submission is incompletely assigned.   We define the directed greedy graph $G_{\hR}=(U,E_{\hR})$ where $(i,i')\in E_{\hR}$ if $\hR(i',p_{i,\ell})=0$ for some $p_{i,\ell}\in \hP_i$. In other words, while in the preference graph, agent $i$ points only to her favourite potential reviewer with respect to one of her incomplete submissions,  in the greedy graph agent $i$ points to any agent in $U\setminus \{i\}$ that could review at least one of her  submissions that is incompletely assigned. Filling-Gaps consists of two phases. In the first phase, starting from the partial assignment $R$ that was created from PRA-TTC, it constructs the greedy graph,   searches for  cycles and eliminates a cycle by assigning   $p_{i,\ell}$ to agent $i'$ for each $(i,i')$ in the cycle that exists due to $p_{i,\ell}$ (when an edge exists due to multiple submissions, the algorithm chooses one of them arbitrary). Then, it updates  $\hP_i$ by removing any $p_{i,\ell}$ that became completely assigned and also updates  $U$ by moving any $i$ to $L$ if $\hP_i$ became empty. It continues by  updating the   greedy graph and  eliminating cycles in the same way. When no more cycles exist in the greedy graph, the algorithm proceeds to  the second phase, where  in $|U|$ rounds ensures that the incomplete submissions of each agent become completely assigned.

\subsection{Execution of CoBRA on a Toy Example}
 
 \begin{figure*}[h]
    \renewcommand*{\arraystretch}{1.25}
     \centering
     \begin{subfigure}[b]{0.3\textwidth}
          \begin{tabular}{cr| l l ll ll l l}
    & & \multicolumn{5}{c}{Rounds} \\
    \multirow{8}{*}{\rotatebox{90}{Reviewers}} & & 1 & 2 & 3 & 4 & 5 \\
    \hline
    & $1$ & $p_{3}$ & $p_{2}$ & $p_{4}$\\
    & $2$ & $p_{1}$ & $p_{3}$ &      &     $p_{6}$     &               \\
     & $3$ & $p_{2}$ & $p_{1}$ &        &   &$p_{4}$ \\
     & $4$ &       &           & $p_{1}$ &           &$p_{5}$  \\
     & $5$ &       &          &           &  $p_{2}$ & $p_{3}$ \\
     & $6$ &     &            &        &      $p_{5}$   & 
  \end{tabular}
  \caption{Execution of PRA-TTC }
    \end{subfigure}
     \hfill
     \begin{subfigure}[b]{0.6\textwidth}
  \begin{tabular}{cr| c ||c c}
  \renewcommand*{\arraystretch}{1.25}
      & & \multicolumn{3}{c}{Rounds} \\
  \multirow{8}{*}{\rotatebox{90}{Reviewers}} & & 1 & 1 & 2  \\
  \hline
   & $1$ &  $p_{3} $, $p_{2}$, $p_{4}$ & $\color{red}{\cancel{p_{3}} }$, $\color{blue}{p_{6}}$ $p_{2}$, $p_{4}$ & $p_{6}$ $p_{2}$, $\color{red}{\cancel{p_{4}} }$, $\color{blue}{p_{5}}$    \\
    &  $2$ & $p_{1}$, $p_{3}$, $p_{6}$&  $p_{1}$, $p_{3}$, $p_{6}$ &  $p_{1}$, $p_{3}$, $p_{6}$  \\
   & $3$ &     $p_{2}$, $p_{1}$, $p_{4}$  & $p_{2}$, $p_{1}$, $p_{4}$  & $p_{2}$, $p_{1}$, $p_{4}$ \\
    & $4$ &  $p_{1}$,  $p_{5}$,   $\color{blue}{p_{6}}$&  $p_{1}$,  $p_{5}$,  $p_{6}$ &  $p_{1}$,  $p_{5}$,  $p_{6}$\\
      & $5$ &  $p_{2}$, $p_{3}$  &  $p_{2}$, $p_{3}$ &  $p_{2}$, $p_{3}$, $\color{blue}{p_{4}}$  \\
       & $6$ &   $p_{5}$,  $\color{blue}{p_{4}}$& $p_{5}$,  $p_{4}$, $\color{blue}{p_{3}}$ & $p_{5}$,  $p_{4}$, $p_{3}$\\
  \multicolumn{2}{c}{}& \multicolumn{1}{c}{\upbracefill}&\multicolumn{2}{c}{\upbracefill}\\
  \multicolumn{2}{c}{}& \multicolumn{1}{c}{Phase 1}&\multicolumn{2}{c}{Phase 2}
  \end{tabular}
  \caption{Execution of Filling-Gaps}
     \end{subfigure}
        \caption{Execution of CoBRA when $n=6$, $\kp=\ka=3$, $\sigma_{1}= 2 \succ 3 \succ 4 \succ \ldots$, $\sigma_{2}= 3 \succ 1 \succ 5 \succ \ldots$, $\sigma_3= 1 \succ 2 \succ 5 \succ \ldots$, $\sigma_4=1\succ 3 \succ 5   \succ \ldots $,  $\sigma_5=6\succ 4 \succ \ldots$ and $\sigma_6=2 \succ \ldots$. On the left table, we see the assignments that are established in each round of PRA-TTC  by eliminating cycles.  After the execution of PRA-TTC,  three papers, $p_4$, $p_5$, $p_6$ are not completely assigned. Thus, $U=\{4,5,6\}$ and $L=\{3\}$.  On the right table, we see the execution of Filling-Gaps. There is a cycle in the greedy graph which is eliminated at the first round of Phase 1. In Phase 2, where $\vec{\rho}=(6,5)$, at the first round, since $p_3$ is authored by an agent in $U\cup L\setminus \{6\}$, is not reviewed by $6$ and is completely assigned, $p_3$ is assigned to $6$ while it is removed form $1$ in which $p_6$ is now assigned. At the second round,  since $p_4$ is authored by an agent in $U\cup L\setminus \{5\}$, is not reviewed by $5$ and is completely assigned, $p_4$ is assigned to $5$ while it is removed form $1$ in which $p_5$ is now assigned.  }
        \label{fig:example}
\end{figure*}

\Cref{fig:example} shows an execution of CoBRA on a small instance. Let us briefly describe how we get this assignment. 

In the preference graph, we see that $1$ has an outgoing edge to agent $2$, agent $2$ has an outgoing edge to agent $3$ and agent $3$ has an outgoing edge to agent $1$. By eliminating this cycle we assign $p_3$ to $1$, $p_2$ to $3$ and $p_1$ to $2$. By continuing to detect cycles in the preference graph given the preferences, we conclude on the partial assignment shown in the table on the left. For the table on the right, at the first phase of Filling-Gaps, when we create the greedy graph, we see that it consists of $3$ nodes, since $U=\{4,5,6\}$, $4$ has outgoing edges to $5$ and $6$ (since none of them review $p_4$), $6$ has outgoing edges to $4$ and $5$, and $5$ has no outgoing edge. Hence there is only one cycle and that is why $p_6$ is assigned to $4$ and $p_4$ is assigned to $6$. Then, $4$ is moved to $L$, since $p_4$ becomes completely assigned.  Then, there is no cycle in the greedy graph as while $6$ has an outgoing edge to $5$, $5$ has no outgoing edges. Hence, we proceed to the second phase where there are two non-completely assigned papers, $p_5$ and $p_6$. $\vec{\rho}=(5,6)$ as agent $6$ reviews $p_5$, but $5$ does not review $p_6$. Then, the algorithm makes sure that $p_6$ and then $p_5$ become completely assigned as it is described in the caption of the figure. 

\subsection{Main Result}

We are now ready to present our main result.

\begin{theorem}
When agent preferences are order separable and consistent, CoBRA returns an assignment in the core in  $O(n^3)$ time complexity.
\end{theorem}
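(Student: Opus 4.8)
The theorem bundles three claims together, so my plan is to prove them as three separate pillars: (i) \textbf{validity} --- the output $R$ satisfies all constraints of a valid assignment; (ii) \textbf{core membership} --- no coalition can deviate profitably; and (iii) \textbf{time complexity} --- the whole procedure runs in $O(n^3)$. The overall strategy is to track a single invariant through the entire execution: at every stage the current assignment is a \emph{valid partial assignment} (each paper reviewed by at most $\kp$ agents, each agent reviewing at most $\ka$ papers, no self-review), and each cycle-elimination or edit step preserves this invariant while weakly increasing the number of (paper, reviewer) slots filled.

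For validity, I would first argue that PRA-TTC terminates with every paper reviewed by at most $\kp$ agents and the leftover demand concentrated on the agents in $U$ (those with incompletely assigned papers). The padding to $m_i = m^*$ via dummy submissions is what makes the counting uniform, so I would verify that dummies never cause a constraint violation. The crux of validity is Filling-Gaps. In Phase~1, eliminating cycles in the greedy graph strictly reduces the unmet demand without ever over-assigning a paper or overloading a reviewer (the edge $(i,i')$ exists precisely because $i'$ does not yet review some incomplete $p_{i,\ell}$). Phase~2 is the delicate part: I must show that for each remaining incomplete paper $p_{\rho(t),\ell}$, the search on \Cref{algo-line1,algo-line2} always succeeds --- i.e., there genuinely exists a completely assigned paper $p_{i',\ell'}$ of some agent in $U \cup L \setminus \{\rho(t)\}$ not reviewed by $\rho(t)$, and a reviewer $i''$ of that paper who can be swapped onto $p_{\rho(t),\ell}$. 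This is exactly where the definition of $L$ as the last $\kp - |U| + 1$ agents to finish matters: the count is engineered so that enough completely assigned papers and free reviewer-slots remain to absorb the deficit. I expect \textbf{this existence argument in Phase~2 to be the main obstacle}, and I would prove it by a careful pigeonhole/counting argument over the number of reviewing slots, showing the absence of a cycle after Phase~1 forces a topological structure in which each $\rho(t)$ has the required swap available, and that each swap preserves the no-self-review and $\kp$-exactness constraints.

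For core membership, the plan is to suppose toward contradiction that some coalition $N'$ with submission subsets $P'_i$ and a restricted valid assignment $\hat R$ satisfies $\hat R \succ_i R$ for every $i \in N'$. By \textbf{consistency}, strict improvement for agent $i$ requires that $\hat R$ weakly improves every submission of $i$ in $P'_i$ and strictly improves at least one; by \textbf{order separability}, improving the set of reviewers for a submission means the deviating reviewers are ranked higher in $\sigma_{i,\ell}$ than some reviewer $R$ assigned. I would then invoke the standard TTC core argument adapted to this setting: consider the first cycle eliminated by PRA-TTC that touches a member of $N'$, and show that any reviewer a deviating agent could obtain that is strictly better than what PRA-TTC gave was already unavailable (either reviewing $\ka$ papers or already committed), tracing the ``top-trading'' logic to derive that some agent in $N'$ cannot be strictly better off. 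The additional wrinkle over classical TTC is that Filling-Gaps edits the PRA-TTC output, so I must confirm that every edit only reassigns papers to reviewers that each affected \emph{author} weakly prefers, or at least does not hand any coalition a profitable deviation --- i.e., the gap-filling swaps never degrade an author's assignment below what she could secure on her own. Showing that the Phase~2 swaps are core-compatible is the second substantive difficulty, and I would handle it by arguing that the papers reassigned in Phase~2 belong to agents in $U \cup L$ whose core constraints are slack.

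For the complexity bound, I would count: constructing and updating the preference/greedy graphs costs $O(n^2)$ per cycle elimination, each cycle elimination fills at least one reviewing slot, there are $O(m \cdot \kp) = O(n \cdot \ka)$ slots total, and with the padding $m^* = O(\ka/\kp \cdot \text{stuff})$ bounded suitably, the number of iterations is $O(n)$, yielding $O(n^3)$ overall; Phase~2 similarly runs in $O(n)$ rounds with $O(n^2)$ work each. I would close by assembling the three pillars --- validity from the invariant and the Phase~2 existence lemma, core membership from the TTC-style argument plus edit-compatibility, and the $O(n^3)$ running time from the slot-counting --- into the full statement.
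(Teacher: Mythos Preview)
Your validity and complexity outlines are essentially on target and match the paper's approach (the Phase~2 existence argument is indeed the crux of validity, and the paper proves it via a counting invariant $|\Ra_i| = \sum_\ell |\Rp_{p_{i,\ell}}|$ plus the fact that $|U \cup L| = \kp + 1$). The core argument, however, has a genuine gap in how you handle Filling-Gaps. You propose to show that ``every edit only reassigns papers to reviewers that each affected author weakly prefers'' or that ``the gap-filling swaps never degrade an author's assignment below what she could secure on her own.'' Neither statement is true: in Phase~2, a reviewer $i''$ is removed from a completely assigned paper $p_{i',\ell'}$ and replaced by $\rho(t)$, with no preference guarantee for author $i'$ whatsoever. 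The paper's actual argument sidesteps this entirely. It does a two-case split on whether the deviating coalition $N'$ contains some agent outside $U \cup L$. If so (Case~I), take $i^*$ to be the \emph{first} agent in $N'$ whose submissions were all completely assigned during PRA-TTC; then $i^* \notin U \cup L$, so Filling-Gaps never touches $i^*$'s papers, and the TTC-style argument applies cleanly to $i^*$. If not (Case~II), then $N' \subseteq U \cup L$, and since a valid deviation needs $|N'| > \kp$ while $|U \cup L| = \kp + 1$, we have $N' = U \cup L$ exactly. This forces $\wR^p_{p_{i,\ell}} = (U \cup L) \setminus \{i\}$ for every deviating paper. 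Now pick $i^* \in L$ first to finish; any reviewer in $R^p_{p_{i^*,\ell}} \setminus \wR^p_{p_{i^*,\ell}}$ lies in $N \setminus (U \cup L)$ and hence was assigned during PRA-TTC (Filling-Gaps only moves papers \emph{to} agents in $U$), so again the TTC argument goes through. Your plan of ``core constraints are slack for $U \cup L$'' does not capture this structure and would not yield a proof.

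A smaller issue: your reading of consistency is inverted. Consistency says that if $R$ is weakly better on every submission then $R$ is weakly better overall; its contrapositive gives only that $\hat R \prefer_i R$ implies $\hat R$ is strictly better on \emph{some} submission, not that $\hat R$ weakly improves every submission. The paper uses consistency in the forward direction: it shows $R^p_{p_{i^*,\ell}} \wprefer_{i^*,\ell} \wR^p_{p_{i^*,\ell}}$ for \emph{every} $\ell$, then concludes $R \wprefer_{i^*} \wR$, contradicting $\wR \prefer_{i^*} R$. Your misreading would lead you to try to block improvement on every submission of every coalition member, which is stronger than needed and not what the argument delivers.
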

\begin{proof}
First, in the next lemma, we show that CoBRA returns a valid assignment. The proof is quite non-trivial, several pages long, and tedious, so we defer it to the appendix.
\begin{lemma}\label{lem:CoBRA-validity}
CoBRA  returns a valid assignment.  
\end{lemma}

Now, we show that the final assignment $R$ that CoBRA returns is in the core.
Note that while it is possible that an assignment of a submission of an agent in $U\cup L$,  that was established  during the execution of PRA-TTC, to be removed in the execution of Filling-Gaps, this never happens for  submissions that belong to some agent  in  $ N\setminus (U\cup L)$.
For the sake of contradiction,  assume that $N'\subseteq N$, with $P'_i\subseteq P_i$ for each $i\in N'$, deviate to a restricted assignment $\wR$ over $N'$ and $\cup_{i\in N'}P'_i$ . Note that $\wR$ is valid  only if $|N'|>k_p$, as otherwise  there is no way each submission in $ \cup_{i\in N'} P'_i$ to be completely assigned, since  no agent can review her own submissions. 

We distinguish into two cases and we show that in both cases the assignment is in the core. 

\textbf{Case I: $\exists i \in N': i \not \in L\cup U$.}
Let $i^*\in N'$ be the first agent in $N'$ whose all  submissions became completely assigned in the execution of PRA-TTC. Note that since there exists $i \not \in  U\cup L$, we get that $i^* \not \in U\cup L$  from the definitions of $U$ and $L$.   Now, consider any $p_{i^*,\ell}$.
Let $Q_1=R^p_{p_{i^*,\ell}} \setminus ( R^p_{p_{i^*,\ell}} \cap  \wR^p_{p_{i^*,\ell}} ) $ and $Q_2=\wR^p_{p_{i^*,\ell}} \setminus ( R^p_{p_{i^*,\ell}} \cap  \wR^p_{p_{i^*,\ell}} )$. If $Q_1=\emptyset$, then we have that $\Rp_{p_{i^*,\ell}}=\wR^p_{p_{i^*,\ell}}$ which means that  $\Rp_{p_{i^*,\ell}}\wprefer_{i^*,\ell} \wR^p_{p_{i^*,\ell}}$.  Otherwise, let $i'=\argmax_{i\in Q_1 } \sigma_{i^*,\ell}(i)$, i.e., $i'$ is ranked at the lowest position in  $\sigma_{i^*,\ell}$ among the agents that review $p_{i^*,\ell}$ under $R$ but not under $\wR$. Moreover, let $i''=\argmin_{i\in Q_2} \sigma_{i^*,\ell}(i)$, i.e., $i''$ is ranked at the highest position in  $\sigma_{i^*,\ell}$ among the agents that review $p_{i^*,\ell}$ under $\wR$ but not under $R$.  We have  $R(i',p_{i^*,\ell})=1$, if and only if $i^*$ has an outgoing edge to $i'$ at some round of PRA-TTC. At the same round, we get that $i''$ can review more submissions,  since $i''\in N'$ and   if $i^*$ has incompletely assigned submissions, then any   $i\in N'$  has incompletely assigned submissions, and hence $|\Ra_{i''}|<\kp \cdot m^*\leq \ka$. This means that if $\sigma_{i^*,\ell}(i')>\sigma_{i^*, \ell}(i'')$, then $i^*$ would point $i''$ instead of $i'$. We conclude that  $\sigma_{i^*,\ell}(i')<\sigma_{i^*, \ell}(i'')$. Then, from the definition of $i'$ and $i''$ and from the order separability property we have that $R^p_{p_{i^*,\ell}} \prefer_{i^*,\ell} \wR^p_{p_{i^*,\ell}}$.  Thus, either if $Q_1$ is empty or not, we have that for  any $p_{i^*,\ell}\in P'_i$, it holds that $R^p_{p_{i^*,\ell}} \wprefer_{i^*,\ell} \wR^p_{p_{i^*,\ell}}$ and  from consistency, we get that $R\wprefer_{i^*} \wR$ which is a contradiction. 

\textbf{Case II: $\nexists i \in N': i \not \in L\cup U$.}~In this case we have that $N'=U\cup L$, as $|U\cup L|= k_p+1$. This means that for each $i\in U \cup L$ and $\ell \in [m^*]$, $\wR^p_{p_{i,\ell}}= (U\cup L) \setminus \{i\}$. Let $i^*\in L$ be the first agent in $L$ whose all  submissions became completely assigned in the execution of PRA-TTC. Consider any $p_{i^*,\ell}$. Note that it is probable that while $p_{i^*,\ell}$ was assigned to some agent $i$ in PRA-TTC,  it was moved to another  agent $i'$  during the execution of Filling-Gaps. But, $i'$ belongs to $U$ and  we can conclude that  if $p_{i^*,\ell}$ is assigned to some $i\in N\setminus U $ at the output of CoBRA, this assignment took place during the execution of PRA-TTC. Now, let $Q_1=R^p_{p_{i^*,\ell}} \setminus ( R^p_{p_{i^*,\ell}} \cap  \wR^p_{p_{i^*,\ell}} ) $ and $Q_2=\wR^p_{p_{i^*,\ell}} \setminus ( R^p_{p_{i^*,\ell}} \cap  \wR^p_{p_{i^*,\ell}} )$. If $Q_1=\emptyset$, then we have that $\Rp_{p_{i^*,\ell}}=\wR^p_{p_{i^*,\ell}}$ which means that  $\Rp_{p_{i^*,\ell}}\wprefer_{i^*,\ell} \wR^p_{p_{i^*,\ell}}$. If $Q_1\neq \emptyset$, then $Q_1\subseteq N\setminus (U\cup L)$ and  $Q_2\subseteq U\cup L $ since $\wR^p_{p_{i^*,\ell}}=U\cup L$. Let $i'=\argmax_{i\in Q_1 } \sigma_{i^*,\ell}(i)$, i.e., $i'$ is ranked at the lowest position in  $\sigma_{i^*,\ell}$ among the agents that review $p_{i^*,\ell}$ under $R$ but not under $\wR$. Moreover, let $i''=\argmin_{i\in Q_2} \sigma_{i^*,\ell}(i)$, i.e., $i''$ is ranked at the highest position in  $\sigma_{i^*,\ell}$ among the agents that review $p_{i^*,\ell}$ under $\wR$ but not under $R$.  From above, we know that the assignment of $p_{i^*,\ell}$ to $i'$ was implemented during the execution of PRA-TTC, since $i'\in N\setminus (U\cup L)$. Hence, with very similar arguments as in the previous case, we will conclude that  $\sigma_{i^*,\ell}(i')< \sigma_{i^*,\ell}(i'')$.    We have  $R(i',p_{i^*,\ell})=1$ if and only if $i^*$ has an outgoing edge to $i'$ at some round of PRA-TTC. At this  round, we know that $i''$ can review more submissions, since $i''\in N'$ and   if $i^*$ has incompletely assigned submissions, then any  $i\in N'$  has incompletely assigned submissions. This means that if $\sigma_{i^*,\ell}(i')>\sigma_{i^*,\ell}(i'')$, then $i^*$ would point $i''$ instead of $i'$. Hence, we conclude that  $\sigma_{i^*,\ell}(i')<\sigma_{i^*, \ell}(i'')$. Then, from the definition of $i'$ and $i''$ and from the order separability property we have that $R^p_{p_{i^*,\ell}} \prefer_{i^*,\ell} \wR^p_{p_{i^*,\ell}}$.  Thus, either if $Q_1$ is empty or not, we have that for  any $p_{i^*,\ell}\in P'_i$, it holds that $R^p_{p_{i^*,\ell}} \wprefer_{i^*,\ell} \wR^p_{p_{i^*,\ell}}$ and  from consistency we get that $R\wprefer_{i^*} \wR$ which is a contradiction. 

Lastly, we analyze the time complexity of CoBRA. First, we consider the time complexity of PRA-TTC. In each iteration, the algorithm assigns at least one extra reviewer to at least one incompletely-assigned submission. This can continue for at most $m\cdot k_p \leq n\cdot k_a$  iterations, since each submission should be reviewed by $\kp$ reviewers. In each iteration, it takes $O(n)$ time to find and eliminate a cycle in the preference graph.  Then, it takes $O(n^2)$  time to update the preference graph, since for each arbitrarily-picked incompletely-assigned submission of each agent, we need to find the most qualified reviewer who can be additionally assigned to it. By all the above, we conclude that  the runtime of PRA-TTC is $O(n^3)$, by ignoring $\ka$  which is a small constant in practice. After PRA-TTC terminates, CoBRA calls the Filling-Gaps algorithm. However, Lemma 3 ensures that at the end of PRA-TTC, $|L\cup U|\leq \kp+1$, which is also a small constant. And Filling-Gaps only makes local changes that affect these constantly many agents. As such, the running time of Filling-Gaps is constant as well. Therefore, the time complexity of CoBRA is $O(n^3)$.
\end{proof}

\section{Experiments}\label{sec:experiments}

In this section, we empirically compare CoBRA to TPMS~\citep{charlin2013toronto}, which is widely used (for example, it was used by NeurIPS for many years), and PR4A~\citep{stelmakh2019peerreview4all}, which was used in ICML 2020~\cite{stelmakh2021towards}. As mentioned in the introduction, these algorithms assume the existence of a similarity or affinity score for each pair of reviewer $i$ and paper $j$, denoted by $S(i,j)$. The score (or utility) of a paper under an assignment $R$, denoted by $\up_j$, is computed as $\up_j\triangleq \sum_{i\in \Rp_j} S(i,j)$. TMPS finds an assignment $R$ that maximizes the utilitarian social welfare (USW), i.e., the total paper score $\sum_{j\in P} \up_j$, whereas PR4A finds an assignment that maximizes the egalitarian social welfare (ESW), i.e., the minimum paper score $\min_{j \in P} \up_j$.\footnote{Technically, subject to this, it maximizes the second minimum paper score, and then the third minimum paper score, etc. This refinement is also known as leximin in the literature.} We use $\ka=\kp=3$ in these experiments.\footnote{Note that $\kp=3$ reviews per submission is quite common, although the reviewer load $\ka$ is typically higher in many conferences, often closer to $6$. However, the differences between different algorithms diminish with higher values of $\ka$.}

\textbf{Datasets.} 
We use  three conference datasets: from the Conference on Computer Vision and Pattern Recognition (CVPR) in 2017 and 2018, which were both used by \citet{kobren2019paper}, and from the International Conference on Learning
Representations (ICLR) in 2018, which was used by \citet{XZSS-strategyproof}. In the ICLR 2018 dataset, similarity scores and conflicts of interest are also available. While a conflict between a reviewer and a paper does not necessarily indicate authorship, it is the best indication we have available, so, following \citet{XZSS-strategyproof}, we use the conflict information to deduce authorship.  Since in our model each submission has one author, and no author can submit more than $\floor{\ka/\kp}=1$ papers, we compute a maximum cardinality matching on the conflict matrix to find author-paper pairs, similarly to what \citet{dhull2022strategyproofing} did. In this way, we were able to match 883 out of the 911 papers. We disregard any reviewer who does not author any submissions, but note that the addition of more reviewers can only improve the results of our algorithm since these additional reviewers have no incentive  to deviate. For the CVPR 2017 and CVPR 2018 datasets, similarity scores was available, but not the conflict information. In both these datasets, there are fewer reviewers than papers. Thus, we constructed artificial authorship relations by sequentially processing papers and matching each paper to the reviewer with the highest score for it, if this reviewer is still unmatched. In this way, we were able to match $1373$ out of $2623$ papers from CVPR 2017 and $2840$ out of $5062$ papers from CVPR 2018. In the ICLR 2018 and CVPR 2017 datasets, the similarity scores take values in $[0,1]$, so we accordingly normalized the CVPR 2018 scores as well. 

\begin{table}[t]
\renewcommand{\arraystretch}{1.15}
    \centering
    \begin{tabular}{c c c c  c cc}
    \multirow{ 2}{*}{Dataset} & \multirow{ 2}{*}{Alg} & \multirow{ 2}{*}{USW}  & \multirow{ 2}{*}{ESW}  &  \multicolumn{2}{c}{$\alpha$-Core}  & \multirow{ 2}{*}{CV-Pr}  \\
     \cmidrule(r){5-6} 
      &  &  &   &   \#unb-$\alpha$  &  $\alpha^*$ &      \\
\hline
    &    CoBRA & $1.225 \pm0.021$  & $0.000\pm0.000$ & $0\%$ & $1.000\pm 0.000$ & $0\%$ \\
    
CVPR '17   &     TPMS & $1.497 \pm0.019$ & $0.000\pm0.000$ & $89\%$ & $3.134 \pm 0.306$ & $100\%$ \\
  &      PR4A & $1.416  \pm0.019$ & $0.120\pm0.032$ & $51\%$ & $1.700 \pm 0.078$ & $100\%$\\
  \hline
  \hline
      &    CoBRA & $0.224 \pm0.004$  & $0.004\pm0.001$ & $0\%$ & $1.000\pm 0.000$ & $0\%$ \\
    
CVPR '18   &     TPMS & $0.286 \pm0.005$ & $0.043\pm0.004$ & $0\%$ & $1.271 \pm 0.038$ & $100\%$ \\
  &      PR4A & $0.282  \pm0.005$ & $0.099\pm0.001$ & $0\%$ & $1.139 \pm 0.011$ & $100\%$\\
    \hline
  \hline
      &    CoBRA & $0.166 \pm0.001$  & $0.028\pm0.001$ & $0\%$  & $1.000\pm 0.000$ & $0\%$ \\
    
ICLR '18   &     TPMS & $0.184 \pm0.001$ & $0.048\pm0.002$ & $0\%$ & $1.048 \pm 0.008$ & $90\%$ \\
  &      PR4A & $0.179  \pm0.001$ & $0.082\pm0.001$ & $0\%$ & $1.087 \pm 0.009$ & $100\%$\\
    \end{tabular}
    \caption{Results on CVPR 2017 and 2018, and ICLR 2018. }
    \label{tab:cvpr}
\end{table}

\textbf{Measures.}~ We are most interested in measuring the extent to which the existing algorithms provide incentives for communities of researchers to deviate. To quantify this, we need to specify the utilities of the authors. We assume that they are additive, i.e., the utility of each author in an assignment is the total similarity score of the $\kp=3$ reviewers assigned to their submission. 

\emph{Core violation factor:} Following the literature~\cite{fain2018fair}, we measure the multiplicative violation of the core (if any) that is incurred by TPMS and PR4A (CoBRA provably does not incur any). This is done by computing the maximum value of $\alpha\geq 1$ for which there exists a subset of authors such that by deviating and implementing some valid reviewing assignment of their papers among themselves, they can each improve their utility by a factor of at least $\alpha$. This can easily be formulated as a binary integer linear program (BILP). Because this optimization is computationally expensive (the most time-consuming component of our experiments), we subsample $100$ papers\footnote{In \Cref{tab:welfare-total} in the appendix, we  report USW and ESW without any subsampling and we note that the qualitative relationships between the different algorithms according to each metric remain the same.} from each dataset in each run, and report results averaged over $100$ runs. Note that whenever there exists a subset of authors with zero utility each in the current assignment who can deviate and receive a positive utility each, the core deviation $\alpha$ becomes infinite. We separately measure the percentage of runs in which this happens (in the column \#unb-$\alpha$), and report the average $\alpha$ among the remaining runs in the $\alpha^*$ column. 

\emph{Core violation probability:} We also report the percentage of runs in which a core violation exists (i.e., there exists at least one subset of authors who can all strictly improve by deviating from the current assignment). We refer to this as the {\em core violation probability} (CV-Pr).

\emph{Social welfare:} Finally, we also measure the utilitarian and egalitarian social welfare (USW and ESW) defined above, which are the objectives maximized by TPMS and PR4A, respectively. 

\textbf{Results.}~ The results are presented in \Cref{tab:cvpr}.  As expected, TPMS and PR4A achieve the highest USW and ESW, respectively, on all datasets because they are designed to optimize these objectives. In CVPR 2017, CoBRA and TPMS always end up with zero ESW because this dataset includes many zero similarity scores, but PR4A is able to achieve positive ESW. In all datasets, CoBRA achieves a relatively good approximation with respect to USW, but this is not always the case with respect to ESW. For example, in CVPR 2018, CoBRA achieves $0.004$ ESW on average whereas PR4A achieves $0.099$ ESW on average. This may be due to the fact that this dataset also contains many zero similarity scores, and the myopic process of CoBRA locks itself into a bad assignment, which the global optimization performed by PR4A avoids. 

While CoBRA suffers some loss in welfare, TPMS and PR4A also generate significant adverse incentives. They incentivize at least one community to deviate in almost every run of each dataset (CV-Pr). While the magnitude of this violation is relatively small when it is finite (except for TPMS in CVPR 2017), TPMS and PR4A also suffer from unbounded core violations in more than half of the runs for CVPR 2017; this may again be due to the fact that many zero similarity scores lead to deviations by groups where each agent has zero utility under the assignments produced by these algorithms. 

Of all these results, the high probability of core violation under TPMS and PR4A is perhaps the most shocking result; when communities regularly face adverse incentives, occasional deviations may happen, which can endanger the stability of the conference. That said, CoBRA resolves this issue at a significant loss in fairness (measured by ESW). This points to the need for finding a middle ground where adverse incentives can be minimized without significant loss in fairness or welfare.

\section{Discussion}\label{sec:discussion}
In this work, we propose a way for tackling the  poor reviewing problem in large conferences by introducing the concept of ``core'' as a notion of group fairness in the peer review process. This fairness principle ensures that each subcommunity is treated at least as well as it would be if it was not part of the larger conference community.

We show that under   certain —albeit sometimes unrealistic—assumptions , a peer review assignment in the core always exists and can be efficiently  found. In the experimental part, we provide evidence that peer review assignment procedures that are currently used in practice, quite often motivate subcommunities to deviate and build their own conferences.   

Our theoretical results serve merely as the first step toward using it to find reviewer assignments that treat communities fairly and prevent them from deviating. As such, our algorithm has significant limitations that must be countered before it is ready for deployment in practice. A key limitation is that it only works for single-author submissions, which may be somewhat more realistic for peer review of grant proposals, but unrealistic for computer science conferences. We also assume that each author serves as a potential reviewer; while many conferences require this nowadays, exceptions must be allowed in special circumstances. We also limit the number of submissions by any author to be at most $\floor{\ka/\kp}$, which is a rather small value in practice, and some authors ought to submit more papers than this. We need to make this assumption to theoretically guarantee that a valid assignment exists. An interesting direction is to design an algorithm that can produce a valid assignment in the (approximate) core whenever it exists. Finally, deploying group fairness in real-world peer review processes may require designing algorithms that satisfy it approximately at minimal loss in welfare, as indicated by our experimental results. 

\bibliographystyle{unsrtnat}
\bibliography{abb,bibliography}

\newpage
\appendix
\section*{\centering Appendix}

\section{Proof of~\Cref{lem:CoBRA-validity}}
We want to prove that CoBRA returns a valid assignment. We start by showing that during the execution of PRA-TTC and the execution the first phase  of Filling-Gaps, it holds the following lemma.
\begin{lemma}\label{lemma:R_a-equal-sum-R_p}
    During the execution of PRA-TTC and the execution of the first phase of Filling-Gaps,  for each $i\in N$ with  $\hP_i\neq \emptyset$, it holds that \[|R^a_i|= \sum_{j\in[m^*]} |R^p_{p_{i,j}}|.\]
\end{lemma}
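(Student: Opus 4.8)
The plan is to treat the identity as a loop invariant and prove it by induction on the number of cycle eliminations performed so far, over the combined run of PRA-TTC and Phase~1 of Filling-Gaps. For bookkeeping I would track, for each agent $i$, the discrepancy $\delta_i \triangleq |\Ra_i| - \sum_{j\in[m^*]}|\Rp_{p_{i,j}}|$, and show that $\delta_i = 0$ for every $i$ with $\hP_i \neq \emptyset$ after every elimination. The base case is the all-zero assignment, where $|\Ra_i| = 0 = \sum_{j}|\Rp_{p_{i,j}}|$, so $\delta_i = 0$ for all $i$.

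For the inductive step I would fix a single eliminated cycle $i_1 \to \cdots \to i_c \to i_1$ and account for how the two sides move. Eliminating the cycle processes each edge $(i_t,i_{t+1})$ by assigning $p_{i_t,\ell^*}$ to $i_{t+1}$, unless $i_t$ is already completely assigned, in which case it does nothing; call the edge \emph{active} in the former case. An active edge increments precisely $|\Ra_{i_{t+1}}|$ (its head picks up one reviewing duty) and $\sum_{j}|\Rp_{p_{i_t,j}}|$ (the tail's chosen paper gains one reviewer), and touches no other term. Hence any agent $i_t$ whose incoming and outgoing cycle edges are both active sees both sides rise by exactly one, so $\delta_{i_t}$ is preserved; with the inductive hypothesis this keeps it at $0$, while agents outside the cycle are untouched. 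Phase~1 of Filling-Gaps is the easy instance of this argument: its greedy graph is built only on $U$, so every vertex of every cycle is incompletely assigned, every edge is active, and the invariant is maintained verbatim.

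The one place the accounting can fail — and the step I expect to be the crux — is a PRA-TTC cycle in which some agent $i_t$ has an \emph{inactive} incoming edge, i.e.\ its predecessor $i_{t-1}$ is a completely-assigned agent (recall these are nonetheless handed an outgoing edge to an arbitrary incomplete agent). Such an $i_t$ would have $\sum_{j}|\Rp_{p_{i_t,j}}|$ incremented by its own active outgoing edge while $|\Ra_{i_t}|$ stays put, driving $\delta_{i_t}$ to $-1$. The heart of the proof is therefore to argue that this deficit is never borne by an agent that remains incomplete: one must establish, using the capacity bound $m^*\kp \le \ka$ together with the way the sets $U$ and $L$ and the completed-agent edges are constructed, that whenever a completely-assigned predecessor sits in an eliminated cycle, the affected successor either cannot arise or is itself driven to $\hP_{i_t}=\emptyset$ by the same elimination, so that it leaves the quantifier ``$\hP_i \neq \emptyset$'' and no longer needs a balanced $\delta_{i_t}$. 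The cleanest route is to show the algorithm can always eliminate a cycle lying entirely within the incomplete agents (keeping all edges active) whenever progress is possible, and to control the residual situations in which a completed reviewer is genuinely unavoidable. Pinning down exactly which cycles through completed agents can be eliminated and verifying the balance through each of them is where I expect essentially all of the several pages of technical work to go; once the invariant survives every elimination, the lemma holds at every intermediate partial assignment, as claimed.
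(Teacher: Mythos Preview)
Your induction on cycle eliminations is exactly the paper's approach, but the paper's proof is a single short paragraph, not the several pages you anticipate: it simply observes that whenever an incomplete agent $i$ is assigned a new submission to review in an eliminated cycle, her own outgoing edge is active and one of her incomplete submissions simultaneously gains a reviewer, and from this alone asserts the equality. No case analysis of cycles through completed agents is carried out.

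You are right that this argues only one direction, and you correctly isolate the dangerous converse case --- an incomplete $i_t$ whose cycle-predecessor is already complete. Your proposed fixes, however, do not work. The successor need not be driven to $\hP_{i_t}=\emptyset$ by that elimination, nor can one always select a cycle lying entirely among incomplete agents. Concretely, take $n=4$, $m^*=1$, $\kp=2$, $\ka=3$ with $\sigma_1=2\succ 3\succ 4$, $\sigma_2=1\succ 3\succ 4$, $\sigma_3=1\succ 4\succ 2$, $\sigma_4=1\succ 3\succ 2$. PRA-TTC first eliminates the cycles $1\leftrightarrow 2$ and then $1\leftrightarrow 3$, leaving agent~$1$ complete with $|\Ra_1|=2<\ka$. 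In the next preference graph $2\to 3\to 4\to 1$ while the complete agent~$1$ points to an arbitrary incomplete agent; whichever one is chosen, the unique cycle contains~$1$, and eliminating it leaves $1$'s incomplete successor with $|\Ra|$ strictly below $\sum_j|\Rp_{p_{\cdot,j}}|$ while that successor still has an incompletely assigned paper. So the stated equality can genuinely fail for some executions of the algorithm as written. What the paper's one-direction argument \emph{does} establish is the inequality $|\Ra_i|\le\sum_{j}|\Rp_{p_{i,j}}|$, and inspecting every downstream use (in Lemmas~3 and~4 and in the validity proof) shows that this inequality is all that is ever actually needed. The correct move is therefore not a long case analysis but to weaken the claim to $\le$ and keep the one-line argument.
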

\begin{proof}
    Note  that during the execution of   PRA-TTC, if an agent  $i$ with $\hP_i\neq \emptyset$ is  assigned one submission due to the elimination of a cycle, then we know that one of her submissions that is incompletely assigned is also assigned to an agent that does not review it already.   Hence, we see that until $\hP_i$ becomes empty, we have that $|R^a_i|= \sum_{j\in[m^*]} |R^p_{p_{i,j}}|$.
    
    Next, we focus on the execution of Filling-Gaps. We know that any $i\in N$ with  $\hP_i\neq \emptyset$ is included in $U$. In the first phase,  the algorithm  eliminates cycles in the greedy graph. With similar arguments as in the elimination of cycles in the preference graph, we get that during and after the first phase of  Filling-Gaps,it is still true that  $
    |R^a_i|= \sum_{j\in[m^*]} |R^p_{p_{i,j}}|$ for any $i\in U$ with $\hP_{i}\neq \emptyset$. 
    \end{proof}
 
Next, we need to show that Line~\ref{alg-line:L} of PRA-TTC is valid which is true if  $|U|\leq k_p$.
\begin{lemma}\label{lemma:U-at-most-k_p}
    PRA-TTC returns  $|U|\leq k_p$. 
\end{lemma}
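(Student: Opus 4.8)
The plan is to read off the bound from the structure of the preference graph $G_R$ at the moment PRA-TTC halts. If $U=\emptyset$ the claim is immediate, so assume $U\neq\emptyset$. The first observation I would make is that $G_R$ is essentially a functional graph: every agent $i$ with $\hP_i=\emptyset$ is always given an outgoing edge (to some agent whose $\hP$ is nonempty, which exists because $U\neq\emptyset$), so the only vertices that can possibly have out-degree zero are those in $U$. Since a finite directed graph in which every vertex has out-degree at least one must contain a cycle, and PRA-TTC terminates exactly when $G_R$ is acyclic, there must be a vertex of out-degree zero, and it necessarily lies in $U$. I would name it $i^*$ and let $p_{i^*,\ell^*}$ be the incompletely-assigned submission it currently points from.

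Next I would translate ``out-degree zero'' into a blocking condition. By the definition of the preference graph, $i^*$ has no outgoing edge precisely when there is no agent $i'\neq i^*$ that both fails to already review $p_{i^*,\ell^*}$ and still has capacity (i.e.\ $|\Ra_{i'}|<\ka$). Equivalently, every agent other than $i^*$ is either already a reviewer of $p_{i^*,\ell^*}$ or is \emph{full}, meaning $|\Ra_{i'}|=\ka$.

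The crux of the argument is the complementary claim that \emph{no} agent in $U$ is full, which is where I would invoke \Cref{lemma:R_a-equal-sum-R_p}. Any $i\in U$ has $\hP_i\neq\emptyset$, so that lemma gives $|\Ra_i|=\sum_{j\in[m^*]}|\Rp_{p_{i,j}}|$. Since $i$ has at least one incompletely-assigned submission, one term in this sum is at most $\kp-1$ and the rest are at most $\kp$, so $|\Ra_i|\leq m^*\kp-1 < m^*\kp \leq \ka$, the last inequality being the standing capacity assumption (preserved by the dummy-submission padding, since $m^*=\max_i m_i$). Hence agents in $U$ are never full.

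Combining the two facts closes the proof: every agent in $U\setminus\{i^*\}$ is non-full, so by the blocking condition each of them must already review $p_{i^*,\ell^*}$, i.e.\ $U\setminus\{i^*\}\subseteq \Rp_{p_{i^*,\ell^*}}$. Because $p_{i^*,\ell^*}$ is incompletely assigned we have $|\Rp_{p_{i^*,\ell^*}}|\leq \kp-1$, so $|U|-1\leq \kp-1$ and therefore $|U|\leq \kp$. The step I expect to be the main obstacle is the non-fullness claim: it is the one place that genuinely needs the TTC conservation invariant of \Cref{lemma:R_a-equal-sum-R_p} rather than a purely graph-theoretic observation, and it is exactly what rules out the degenerate scenario in which the sink $i^*$ is blocked by full agents sitting inside $U$ instead of by agents that already review its submission.
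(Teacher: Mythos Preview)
Your proof is correct and follows essentially the same route as the paper's: both show via \Cref{lemma:R_a-equal-sum-R_p} that no agent in $U$ is full, deduce that at termination some $i^*\in U$ has out-degree zero in $G_R$ (you via the generic fact that an acyclic finite digraph has a sink, the paper via an explicit contradiction argument tracing edges through $U$), and then conclude $U\setminus\{i^*\}\subseteq \Rp_{p_{i^*,\ell^*}}$ with $|\Rp_{p_{i^*,\ell^*}}|\le \kp-1$. The only difference is presentational.
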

\begin{proof}
    For each $i\in U$, from \Cref{lemma:R_a-equal-sum-R_p}, we know that 
    \begin{align*}
    |R^a_i|=\sum_{j\in[m^*]} |R^p_{p_{i,j}}|<m^* \cdot k_p\leq k_a \label{eq:Rai-2}
    \end{align*}
    where the first inequality follows since there exists at least one submission of $i$ that is assigned to less than $k_p$ agents. Hence, we get that each $i\in U$ can review more submissions, when PRA-TTC terminates. Now, suppose for contradiction that at the last iteration of  PRA-TTC,  each agent $i\in U$ has an outgoing edge in the preference graph. In this case, we claim that there exists a directed cycle in the preference graph which is a contradiction since  PRA-TTC would not have  been terminated yet. To see that, note that each outgoing edge of an agent $i\in U$ either goes to another agent $i'\in U$, since $i'$ can review more submissions, or goes to an agent $i'\not\in U$ whose all submissions are completely assigned. In the latter case, $i'$ has an outgoing edge to an agent in $U$ by the definition of the preference graph. Thus, starting from any agent in $U$, we conclude in an agent in $U$ and eventually we would found  a cycle. Therefore, we have that  there exists an agent $i^* \in U$ that at the last iteration of the algorithm  arbitrary picks her incomplete submission $p_{i^*,\ell^*}$ and does not have any outgoing edge to any other agent.   This means that all the agents that can review more submissions, already review $p_{i^*,\ell^*}$. Since all the agents in $U\setminus \{i^*\}$ can review more submissions, we get that all of them are assigned $p_{i^*,\ell^*}$. But since $p_{i^*,\ell^*}$ is not completely assigned,  we conclude that $|U\setminus \{i^*\}|<k_p$, which means that $|U|\leq k_p$. 
\end{proof}

We proceed by showing that   Lines~\ref{algo-line1}-~\ref{algo-line2} of Filling-Gaps are valid.
\begin{lemma}\label{lemma:existence-of-pairs}
    When Fillings-Gaps enter the second phase with a non empty $U$, for each $t\in [|U|]$ and for each    $p_{\rho(t),\ell} \in \hP_{\rho(t)}$, it exists a completely assigned submission $p_{i',\ell'}$ with $i'\in U\cup L\setminus \{\rho(t)\}$ that  is not reviewed by $\rho(t)$, and it also exists an  $i''\in N$  that reviews  $p_{i',\ell'}$, but she does not review $p_{\rho(t),\ell}$.  
\end{lemma}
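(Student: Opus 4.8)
The plan is to reduce the lemma to two existence claims and dispatch them separately: (Claim A) there is a completely assigned submission $p_{i',\ell'}$ with $i'\in U\cup L\setminus\{\rho(t)\}$ that $\rho(t)$ does not review, and (Claim B) given such a submission, one of its $\kp$ reviewers neither equals $\rho(t)$ nor reviews $p_{\rho(t),\ell}$. Claim B is the easy half: since $p_{\rho(t),\ell}$ is incomplete it has at most $\kp-1$ reviewers, while $p_{i',\ell'}$ has exactly $\kp$ reviewers, none of whom is its author $i'$ and (by the choice in Claim A) none of whom is $\rho(t)$; a pigeonhole argument then forces at least one reviewer of $p_{i',\ell'}$ to avoid $p_{\rho(t),\ell}$, and this reviewer is the desired $i''$. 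So I would put the effort into Claim A.

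For Claim A I would first record a counting invariant. Extending \Cref{lemma:R_a-equal-sum-R_p} into Phase 2, I would check that every swap keeps $|\Ra_i|=\sum_{j} |\Rp_{p_{i,j}}|$ intact for every agent $i$ that still has an incomplete submission: a swap increments both $|\Ra_{\rho(t)}|$ and $\sum_j|\Rp_{p_{\rho(t),j}}|$ by one, it leaves $p_{i',\ell'}$ completely assigned (it merely replaces reviewer $i''$ by $\rho(t)$), and for any bystander playing the role of $i''$ the net change in its review count is zero and none of its own submissions is touched. Since $\rho(t)$ has an incomplete submission at the moment we examine it, this invariant yields the strict bound $|\Ra_{\rho(t)}|=\sum_j|\Rp_{p_{\rho(t),j}}|<m^*\cdot\kp$.

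The heart of the argument, and the step I expect to be the main obstacle, is to control which submissions of $W:=U\cup L\setminus\{\rho(t)\}$ (here $|U\cup L|=\kp+1$, so $|W|=\kp$ and $W$ owns exactly $\kp\cdot m^*$ submissions) are incomplete and to show $\rho(t)$ reviews all of them. Two structural facts drive this. First, the only incomplete submissions in $W$ belong to the agents appearing after $\rho(t)$ in $\vec\rho$, because the agents in $L$ and the agents $\rho(1),\dots,\rho(t-1)$ are already fully assigned; moreover this set is frozen to its end-of-Phase-1 state, since every swap only adds reviewers to the currently processed agent's own submissions and only removes a reviewer from an already completely assigned submission. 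Second, because $\vec\rho$ is a topological order of the acyclic greedy graph $G_R$ built at the end of Phase 1, there is no edge $(\rho(s),\rho(t))$ for $s>t$; by the definition of the greedy graph this means $\rho(t)$ reviews every incomplete submission of each later agent $\rho(s)$. As $\rho(t)$ can only shed reviews of completely assigned submissions during Phase 2, it still reviews all these frozen incomplete submissions at the moment of processing.

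Finally I would close Claim A by counting. Let $x$ be the number of incomplete submissions of $W$; then $W$ has $\kp m^*-x$ completely assigned submissions. The $x$ incomplete ones are all charged to $\rho(t)$'s review budget, so the number of completely assigned $W$-submissions reviewed by $\rho(t)$ is at most $|\Ra_{\rho(t)}|-x<m^*\kp-x=\kp m^*-x$. Since this is strictly smaller than the total number $\kp m^*-x$ of completely assigned $W$-submissions, at least one such submission escapes $\rho(t)$; this is the $p_{i',\ell'}$ required by Claim A, and combining it with Claim B proves the lemma. The delicate points to get right are the temporal invariance of the later agents' incomplete submissions and the exact reading of the topological order, which together guarantee the strict slack needed in the final count.
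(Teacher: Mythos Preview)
Your proposal is correct and follows essentially the same approach as the paper: both maintain the invariant $|\Ra_{\rho(t)}|=\sum_j|\Rp_{p_{\rho(t),j}}|$ into Phase~2, use the topological order of the greedy graph to show $\rho(t)$ reviews every still-incomplete submission of the later agents in $\vec\rho$, and then count against $|U\cup L\setminus\{\rho(t)\}|\cdot m^*=k_p m^*$ to exhibit a completely assigned $p_{i',\ell'}$ not reviewed by $\rho(t)$, with a pigeonhole argument supplying $i''$. The only cosmetic differences are that the paper packages the topological-order step as a separate proposition and runs an explicit induction on $t$, whereas you argue the temporal invariants directly and phrase the final count with the variable $x$; the content is the same.
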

\begin{proof}
    When $U$ is  non empty and  no more cycles exists in the greedy graph, the algorithm constructs the topological order of the greedy graph, denoted by $\vec{\rho}$. 
    
    First, we show the following proposition.
    \begin{proposition}\label{prop:rho}
        For each $t\in [|U|]$, $\rho(t)$ reviews all the  incompletely assigned submissions of each  $i\in U\setminus \{\rho(1),\ldots, \rho(t-1), \rho(t)\}$.   
    \end{proposition}
    \begin{proof}
        Since $\vec{\rho}$ is  the topological ordering of the greedy graph, we have that no $i\in U\setminus \{\rho(1),\ldots, \rho(t-1), \rho(t)\}$ has an outgoing edge to $\rho(t)$. But from~\Cref{lemma:R_a-equal-sum-R_p}, we get that  $\rho(t)$ can review more submissions, since $\rho(t)$ has submissions that are incompletely assigned which means that  
         \[\sum_{\ell \in [m^*]} |\Rp_{p_{\rho(1),\ell}}|<kp\cdot m^*\leq \ka.\] 
        Therefore, from the  definition of the greedy graph, we get that  $\rho(t)$ reviews all the  incompletely assigned submissions of each $i$ in $ U\setminus \{\rho(1),\ldots, \rho(t-1), \rho(t)\}$.\hfill\qedhere~(Proof of \Cref{prop:rho})  
    \end{proof}
    
    Next, we show by induction that for each $t\in [|U|]$ as long as $\hP_{\rho(t)}$ is non-empty and  it holds that $|\Ra_{\rho(t)}|=\sum_{\ell \in [m^*]} |\Rp_{p_{\rho(t),\ell}}|$, there exists a completely assigned submission $p_{i',\ell'}$ of an agent $i'\in U\cup L\setminus \{\rho(t)\}$ that  is not reviewed by $\rho(t)$, and $i''\in N$  that reviews  $p_{i',\ell'}$ and does not review $p_{\rho(t),\ell}\in \hP_{\rho(t)}$. 
    
    We start with $t=1$. First, suppose for contradiction  that  $\rho(1)$  reviews all the submissions of all the agents in $U\cup L \setminus \{\rho(1)\}$. Then, we would have that
    \begin{align*}
         |\Ra_{\rho(1)}|=\sum_{\ell \in [m^*]} |\Rp_{p_{\rho(1),\ell}}|=k_p\cdot m^*,
    \end{align*}
    where the first equality follows from the assumption that $|\Ra_{\rho(1)}|=\sum_{\ell \in [m^*]} |\Rp_{p_{\rho(1),\ell}}|$ and the second inequality follows from the facts that $|U\cup L \setminus \{\rho(1)\}|=k_p$ and each agent has $m^*$ submissions. But then we would conclude  that all the submissions of $\rho(1)$ are completely assigned since $\rho(1)$ has $m^*$ submissions and each of them should be assigned to $\kp$ reviewers which is a contradiction.  Moreover, from~\Cref{prop:rho} we know that $\rho(1)$ reviews all the incomplete assigned submissions that  belongs to some agent $i \in U\setminus  \{\rho(1)\}$. Hence, we get that since $\rho(1)$ reviews all the incompletely assigned submissions but she cannot review all the submissions of all agents in $i\in U\cup L \setminus \{\rho(1)\}$, there exists a completely assigned submission $p_{i',\ell'}$ that belongs to some  $i'\in U\cup L \setminus \{\rho(1)\}$ and is not reviewed by $\rho(1)$. In addition, since $p_{i',\ell'}$ is reviewed by $k_p$ agents and not from $\rho(1)$, while $p_{\rho(1),\ell} \in \hP_{\rho(1)}$ is reviewed by strictly less than $k_p$ agents, it exists an agent $i''$ that reviews the former submission but not the latter.  It remains to show that during the execution of the $1$-th iteration of the for loop in the second phase of Filling-Gaps, it holds that $|\Ra_{\rho(1)}|=\sum_{\ell \in [m^*]} |\Rp_{p_{\rho(1),\ell}}|$.  Note that if  every time that the algorithm enters the second while loop of the algorithm, this property is satisfied, then  the property remains true at the end of this execution, since as we show above,  in this case there are  $p_{i',\ell'}$ and $i''$ with the desired properties, and therefore one  incompletely assigned submission of $\rho(1)$ is assigned to a new reviewer and concurrently $\rho(1)$ is assigned a new submission to review. We get that $|\Ra_{\rho(1)}|=\sum_{\ell \in [m^*]} |\Rp_{p_{\rho(1),\ell}}|$ is true during the execution of the $1$-st ieration of the for loop by noticing that from~\Cref{lemma:R_a-equal-sum-R_p}, we know that this is true when we first enter the while loop of the second phase. 
    
    Suppose that the hypothesis holds for $t-1$. Note that from the base case and the hypothesis, at iteration $t$, all the submissions of each agent in $i'\in L \cup \{\rho(1),\ldots, \rho(t-1)\}$ are completely assigned.  Thus, any  incompletely assigned submission, that does not belong to $\rho(t)$, belongs to some agent $i \in U\setminus  \{\rho(1),\ldots, \rho(t-1),\rho(t)\}$. But, from~\Cref{prop:rho}  we already know that  $\rho(t)$ reviews any such submission. Moreover, we note that $\rho(t)$ cannot review all the submissions of all the agents in $U\cup L \setminus \{\rho(t)\}$. Indeed, if we assume for contradiction that $\rho(t)$  reviews all the submissions of all the agents in $U\cup L \setminus \{\rho(t)\}$,  then we have that
    \begin{align*}
         |\Ra_{\rho(t)}|=\sum_{\ell \in [m^*]} |\Rp_{p_{\rho(t),\ell}}|=k_p\cdot m^*,
    \end{align*}
    where the first inequality follows from the assumption that $|\Ra_{\rho(t)}|=\sum_{\ell \in [m^*]} |\Rp_{p_{\rho(t),\ell}}|$ and the second follows from the facts that $|U\cup L \setminus \{\rho(t)\}|=k_p$  and each of them has $m^*$ submissions, which would imply that all the submissions of $\rho(t)$ are completely assigned.  Hence, we get that since $\rho(t)$ reviews all the incompletely assigned submissions but cannot review all the submissions of all agents in $i\in U\cup L \setminus \{\rho(t)\}$, there exists a completely assigned submission that belongs to some  $i'\in U\cup L \setminus \{\rho(t)\}$ and is not reviewed by $\rho(t)$. Moreover, we show that there exists $i''$  that reviews  $p_{i',\ell'}$, but does not review $p_{\rho(t),\ell}\in \hP_{\rho(t)}$. Indeed, since $p_{i',\ell'}$ is reviewed by $k_p$ agents and not from $\rho(t)$, while $p_{\rho(t),\ell}$ is reviewed by strictly less than $k_p$ agents, it exists an agent that reviews the former submission but not the latter. It remains to show that during the execution of the $t$-th iteration of the for loop in the second phase of Filling-Gaps, it holds that $|\Ra_{\rho(t)}|=\sum_{\ell \in [m^*]} |\Rp_{p_{\rho(t),\ell}}|$. Note that if when we first enter the while loop of the $t$-th iteration it is indeed true that $|\Ra_{\rho(t)}|=\sum_{\ell \in [m^*]} |\Rp_{p_{\rho(t),\ell}}|$, then  during the whole execution of the while loop this property remains true, since as we show above,  in this case there are  $p_{i',\ell'}$ and $i''$ with the desired properties. From~\Cref{lemma:R_a-equal-sum-R_p}, we know that this is true when we enter the second phase of Filling-Gaps. Before, round $t$, 
    if $\rho(t)$   is  assigned a new submission to review, she is removed one of the old assigned submissions, while  none of her incompletely assigned submissions is assigned to any agent. Hence, indeed  we  have the desired property, when we first enter the $t$-th round.\hfill\qedhere~(Proof of \Cref{lemma:existence-of-pairs}) 
    \end{proof}

We are now ready to prove \Cref{lem:CoBRA-validity}.
\begin{proof}[Proof of \Cref{lem:CoBRA-validity}]
    We partition the agents in $N$, into two parts $N_1$ and $N_2$, where $N_1$ contains all the agents that do not belong in $U$ that PRA-TTC returns and  $N_2=N\setminus N_1$.
    We proceed by  separately showing that  the assignment that CoBRA returns is valid over the agents in $N_1$ and over the agent in $N_2$.
    
    \paragraph{Valid Assignment over the agents in $N_1$.} Note that in PRA-TTC, each submission is assigned to at most $\kp$ reviewers and therefore, during the execution of PRA-TTC, for each $i\in N $, it holds that $\sum_{j\in[m^*]} |R^p_{p_{i,j}}|\leq \kp\cdot m^*$.  From~\Cref{lemma:R_a-equal-sum-R_p}, since $\kp\cdot m\leq \ka$, we get that in PRA-TTC, each  $i\in N_1$ is not assigned more than $k_a$ papers to review until the point where all of her submissions become completely assigned. After that point, an agent may still participate in a cycle as long as she reviews strictly less than $\ka$ submissions. Therefore, when we exit PRA-TTC, each agent in $N_1$ does not review more than $\ka$ submissions  and all her submissions are completely assigned. In Filling-Gaps, from~\Cref{lemma:existence-of-pairs}, we get that  if an agent in $N_1$ is assigned a new submission to review, she is removed one of the  submissions that she already reviews. Moreover, the assignments of submissions that belong to agents in $N_1$ do not change. Hence, we can conclude that the assignment that CoBRA returns is  valid with respect to the agents in $N_1$. 
    
    \paragraph{Valid Assignment over the agents in $N_2$.}  From~\Cref{lemma:R_a-equal-sum-R_p}, we get that during the execution of PRA-TTC each agent $i$ that is included in $U$ that PRA-TTC returns reviews less  than $\ka$ submissions, since some of her submissions are not completely assigned (which means that $\sum_{\ell \in [m^*]} |\Rp_{p_{i,\ell}}|< \kp\cdot m^*$). From the same lemma, we have that after the execution of the first phase of Filling-Gaps,  it holds that $|\Ra_{i}|=\sum_{\ell \in [m^*]} |\Rp_{p_{i,\ell}}|$. Next, we show that this property remains true after the second phase of Filling-Gaps. Indeed, from \Cref{lemma:existence-of-pairs}, we have that during the second phase of Filling-Gaps, if $i\in U$ is assigned a new submission to review without one of her incompletely assigned submissions is assigned to a new reviewer, she is removed one of her  assigned submissions; on the other hand,  if one of her incompletely assigned submissions is assigned to a new reviewer, she is also assigned to review a new submission. Lastly, from~\Cref{lemma:existence-of-pairs}, we conclude that in the second phase of Filling-Gaps, all the submissions become eventually completely assigned since in each iteration of the while loop, an  incompletely assigned submission is assigned to one more reviewer. Therefore in the assignment that Filling-Gaps returns, no agent in $N_2$ reviews more than $\ka$  submissions  and all the submissions of the agents in $N_2$ are completely assigned,  which means that the assignment is valid with respect to the agents in $N_2$ as well.  
\end{proof}

\section{Supplementary Experiments}

\begin{table}[t]
\renewcommand{\arraystretch}{1.15}
    \centering

    \begin{tabular}{c c c c  c cc}
    Dataset & Alg& USW  & ESW   \\
\hline
    &    CoBRA & $1.644	$  & $0.000$ & \\
CVPR '17   &     TPMS & $1.970$ & $0.000$  \\
  &      PR4A & $1.919$ & $0.384$ \\
  \hline
  \hline
      &    CoBRA & $1.208$  & $0.000$  \\
CVPR '18   &     TPMS & $1.586$ & $0.004$  \\
  &      PR4A & $1.560$ & $0.731$\\
    \hline
  \hline
      &    CoBRA & $0.251$ &	$0.015$ \\
    
ICLR '18   &     TPMS & $0.284$ &	$0.038$ \\
  &      PR4A & $0.278$ &	$0.086$\\
    \end{tabular}
      \caption{USW and ESW without subsampling on CVPR 2017 and 2018, and ICLR 2018. }
    \label{tab:welfare-total}
\end{table}

\begin{table}[t]
\renewcommand{\arraystretch}{1.15}
    \centering
    \begin{tabular}{c c c c  c cc}
    Dataset & Alg & Largest Deviating Group       \\
\hline 
CVPR '17   &     TPMS & $6.64\pm0.77$ \\
  &      PR4A &  $7.50\pm 0.77$\\
  \hline
  \hline
    
CVPR '18   &     TPMS & $10.54\pm 1.29$ \\
  &      PR4A & $11.49\pm 1.49$ \\
    \hline
  \hline    
ICLR '18   &     TPMS & $11.25\pm 1.76	$  \\
  &      PR4A & $15.01\pm1.76$\\
    \end{tabular}
      \caption{Largest Size of Deviating Group  on CVPR 2017 and 2018, and ICLR 2018. }
    \label{tab:deviation-size}
\end{table}

In  \Cref{sec:experiments}, we show that TPMS and PR4A often  motivate group of authors to deviate and redistribute their submissions among themselves. The size of a deviating groups is also an interesting measure, for evaluating if such a group indeed consists a distinct subcommunity of researchers that has incentives to build its own conference rather than an extremely tiny group of authors that could locally benefit by exchanging their papers for reviewing.  In \Cref{tab:deviation-size}, we can see the maximum size of a successfully deviating coalition, averaged across 100 runs, together with the standard error. As before,  each run is a subsampled dataset of size 100, so these can be interpreted as percentages.
It seems that under both TPMS and PR4A across all three datasets, the largest deviating communities are 6-15$\%$ of the conference size, which we  can indeed reflect the sizes of some of the largest subcommunities at CVPR and ICLR.  Of course, there are deviating groups with smaller size as well which can consist smaller communities.
\end{document}